\newtheorem{thm}{Theorem}
\newtheorem{lem}{Lemma}
\newtheorem{prb}{Problem}
\newcommand{\JG}[1]{{\color{black} #1}}
\newcommand{\BS}[1]{{\color{black} #1}}
\newcommand{\BSwcnc}[1]{{\color{black} #1}}
\newcommand{\BSwcncfin}[1]{{\color{black} #1}}
\newcommand{\BSr}[1]{{\color{black} #1}}
\newcommand{\JGr}[1]{{\color{black} #1}}
\begin{document}
\title{On Optimal Geographical Caching in Heterogeneous Cellular Networks}
\author{\IEEEauthorblockN{Berksan Serbetci}
\IEEEauthorblockA{Stochastic Operations Research\\University of Twente, The Netherlands\\
b.serbetci@utwente.nl}
\and
\IEEEauthorblockN{Jasper Goseling}
\IEEEauthorblockA{Stochastic Operations Research\\University of Twente, The Netherlands\\
j.goseling@utwente.nl}}
\maketitle
\begin{abstract}
In this work we investigate optimal geographical caching in heterogeneous cellular networks \BSr{where} different types of base stations (BSs) have different cache capacities. \BSr{Users request files from a content library according to a known probability distribution.} The performance metric is the total hit probability, which is the probability that a user at an arbitrary location in the plane will find the content that it requires in one of the BSs that it is covered by. 

We consider the problem of optimally placing content in all BSs jointly. As this problem is not convex, we provide a heuristic scheme by finding the optimal placement \BSr{policy for one type of base station conditioned on the placement in all other types}. We demonstrate that these individual optimization problems are convex and we provide an analytical solution. As an illustration, we find the optimal placement policy of the small base stations (SBSs) depending on the placement policy of the macro base stations (MBSs). We show how the hit probability evolves as the deployment density of the SBSs varies. We show that the heuristic of placing the most popular content in the MBSs is almost optimal after deploying the SBSs with optimal placement policies. Also, for the SBSs no such heuristic can be used; the optimal placement is significantly better than storing the most popular content. Finally, we show that solving the individual problems to find the optimal placement policies for different types of BSs iteratively, namely repeatedly updating the placement policies, does not improve the performance.
\end{abstract}
\section{Introduction}
\BSr{Recent years have seen} extreme growth in data traffic over cellular networks. \BSr{The demand} is expected to \JGr{further} increase in the upcoming years such that current network \BS{infrastructures \cite{cisco}} will not be able to support this data traffic \cite{femtocells}. \JGr{One of the bottlenecks is formed by the backhaul and a promising means of reducing}
backhaul traffic \JGr{is} \BS{by} reserving some storage capacity at \BSr{base stations (BSs)} and use these as caches \BSwcnc{\cite{femtod2d}}. In this way, part of the data is stored at the wireless edge and the backhaul is used only to refresh this stored data. Data replacement will depend on the users' demand distribution over time. \BSwcncfin{Since this distribution is varying slowly, the stored data can be refreshed at off-peak times.} \BSr{Thus}, \BSwcnc{caches} containing popular content \BSr{can serve users without incurring a load on the backhaul}.

\BSwcnc{Recently, there has been growing interest in caching in cellular networks.}
In \cite{femtocaching} Shanmugam \emph{et al.} focus on the content placement problem and analyze which files should be cached by which helpers for the given network topology and file popularity distribution by minimizing the expected total file delay.
In \cite{approximation} Poularakis \emph{et al.} provide an approximation algorithm for the problem of minimizing the user content requests routed to macrocell base stations with constrained cache storage and bandwidth capacities.
In \cite{optimalgeographic} B{\l}aszczyszyn \emph{et al.} revisit the optimal content placement in cellular caches by assuming a known distribution of the coverage number and provide the optimal probabilistic placement policy which guarantees maximal total hit probability.
In \cite{fundamental} Maddah-Ali \emph{et al.} developed an information-theoretic lower bound for the caching system for local and global caching gains. In \cite{distrcach} Ioannidis \emph{et al.} propose a novel mechanism for determining the caching policy of each mobile user that maximize the system's social welfare.
In \cite{exploiting} Poularakis \emph{et al.} consider the content storage problem of encoded versions of the content files. In \cite{cacheenabled} Bastug \emph{et al.} couple the caching problem with the physical layer. In \cite{diststorage} Altman \emph{et al.} compare the expected cost of obtaining the complete data under uncoded and coded data allocation strategies for caching problem. \BS{Cache placement with the help of stochastic geometry and optimizing the allocation of storage capacity among files in order to minimize the cache miss probability problem is presented by Avrachenkov \emph{et al.} in \cite{optimizatioofcaching}. A combined caching scheme where part of the available cache space is reserved for caching the most popular content in every small base station, while the remaining is used for cooperatively caching different partitions of the less popular content in different small base stations, as a means to increase local content diversity is proposed by Chen \emph{et al.} in \cite{cooperativecaching}. In \cite{utility} Dehghan \emph{et al.} associate each content a utility, which is a function of the corresponding content hit probabilities and propose utility-driven caching by formulating an optimization problem to maximize the sum of utilities over all contents.}

The \emph{main contribution} of this paper is to find optimal placement strategies that maximize total hit probability in heterogeneous cellular networks. Our focus is on heterogeneous cellular networks in which an operator wants to \BSwcnc{jointly} optimize the cached content in macro base stations (MBSs) and small base stations (SBSs) with different storage capacities. \BSwcnc{This problem is not convex.} \BSwcncfin{Therefore,} we provide a heuristic scheme and optimize only one type of cache (e.g. SBS) with the information coming from other types of caches (e.g. MBS and other SBSs with different cache capacities) at each iteration step.

We show that whether MBSs use the optimal deployment strategy or store ``the most popular content", has \BSwcnc{no} impact on the total hit probability after deploying the SBSs with optimal content placement policies. We show that it is crucial to optimize the content placement strategy of the SBSs in order to maximize the overall performance. \BSwcnc{We show that} heuristic policies like storing the popular content that is not yet available in the MBSs \BS{result} in significant performance penalties.

In Section \ref{model} we start the paper with model and problem definition. In Section \ref{optimization} we present the optimal \BSr{content} placement strategy problem and give required tools to solve it. We conclude this section by \BSwcncfin{providing an iterative heuristic scheme for the joint optimization problem.} In Section \ref{sec:performance} we continue with performance evaluation of the optimal placement strategies for different probabilistic deployment scenarios. In Section \ref{discussion} we conclude the paper with discussions.

%%%%%%%%%%%%%%%%%%%%%%%%%%%%%%
%
%
%
%
%%%%%%%%%%%%%%%%%%%%%%%%%%%%%%
\section{Model and Problem Definition}
\label{model}
\BSwcnc{Throughout the paper we will be interested in two types of base stations, namely MBSs and SBSs. However, \BSr{in} this section we will give a more general formulation as it is possible to have SBSs with different storage capacities in some topologies.} We consider a heterogeneous cellular network with L-different types of base stations in the plane. \JG{These base stations are distributed according to a homogeneous spatial process~\cite{baccelli1}. \BSwcnc{All base stations have a coverage radius $r$.} Let $\mathcal{N}_i$, $i=1,\dots, L$, denote the number of base stations of type $i$ that are covering a user at an arbitrary location in the plane. Furthermore, let $p_{n_i}^{(i)} := \mathbb{P}[\mathcal{N}_i = n_i]$ and $p_{\bm{n}} := \mathbb{P}[\bm{\mathcal{N}} = \bm{n}]$, where $\bm{\mathcal{N}}=(\mathcal{N}_1,\dots, \mathcal{N}_L)$ and $\bm{n} = \left(n_1, \dots, n_L\right).$} \JGr{Each base station is equipped with a cache and we will refer to caches in base stations of type $i$ as type$-i$ caches. Caches of type $i$ have capacity $K_i \geq 1$, $i = 1, \dots, L$, meaning that they can store $K_i$ files from the content library $\mathcal{C} = \{c_1, c_2, \dots, c_J\}$. All files have equal size.

Users independently request files from $\mathcal{C}$ according to probability distribution $a_1, \dots, a_J$, where $a_j$ is the probability of requesting file $c_j$. 
The probability that file $c_j$ is requested is denoted as $a_j$.}
Without loss of generality, $a_1 \geq a_2 \geq \dots \geq a_J$.

We use the probabilistic content placement policy of~\cite{optimalgeographic}. \JGr{In this policy each cache is storing a random subset of the files in the content library. In particular, the probability that a type$-i$ cache is storing file $c_j$ is independent and identically distributed over all type$-i$ caches. Placement is also independent over caches of different types.} We denote the probability that the content $c_j$ is stored at a type$-i$ cache as
\begin{equation}
b_j^{(i)} := \mathbb{P}\left(c_j \text{ stored in  type$-i$ cache}\right),
\end{equation}
and the \BSr{content} placement strategy $\bm{b}^{(i)} = (b_1^{(i)}, \dots, b_J^{(i)})$  as a J-tuple for any type$-i$ cache. 

\JGr{One means obtaining a placement according this distribution is as follows.} The memory of a type$-i$ cache is divided into $K_i$ continuous memory intervals of unit length. Then $b_j^{(i)}$ values fill the cache memory sequentially and continue filling the next slot if not enough space is available in the memory slot that it has started filling in as in the end completely covering the $K_i$ memory intervals. Then, for any type$-i$ cache, a random number from the interval $[0,1]$ is picked and the intersecting $K_i$ files are cached. An example is shown in Figure \ref{realization}.
\begin{figure}
\centering
\includegraphics[width=0.7\columnwidth]{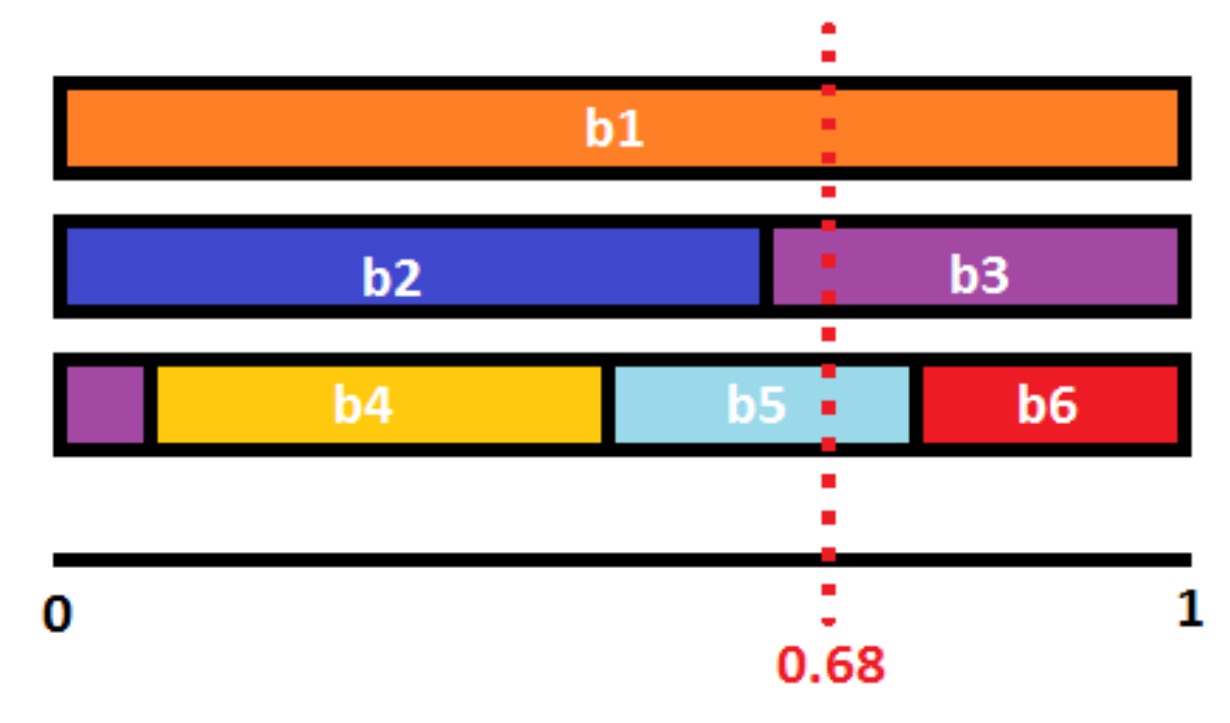}
\caption{A realization of the probabilistic placement policy ($J = 6$ and $K = 3$). A random number is picked (0.68 in this case) and the vertical line intersects with the optimal $\bm{b}^{(i)}$ values. From this figure, we conclude that the content subset $\{c_1, c_3, c_5\}$ will be cached.}
\label{realization}
\end{figure}
\section{Content Placement Problem - \BSwcnc{Individual} Optimization}
\label{optimization}
\subsection{Formulation of the problem}
We start this section with defining our performance metric and the formulation of the optimization problem. The performance metric is the total hit probability which is the probability that a user will find the content that he requires in one of the {\it caches} that he is covered by. We assume that the \BSr{content} placement strategy for the probability distribution over $J$ files through all $L-1$ types is \JG{fixed and} known and we will solve the optimization \JG{problem for} only one type. \JG{In this section, without loss of generality, we consider the optimization of type$-1$.}
For notational convenience, superscript $^c$ \JG{in the notation $b_j^{(i)^c}$} indicates that the \BSr{content} placement strategy \JG{for type$-i$} is known and constant. Then, the total hit probability is given by
\begin{equation}
f\left(\bm{b}^{(1)}\right) = 1 - \sum_{j=1}^J a_j \sum_{n_1= 0}^{\infty}(1-b_j^{(1)})^{n_1}p_{n_1}^{(1)} q(j, n_1),
\label{hitprob}
\end{equation}
where
\begin{align}
&q(j, n_1) = P(\text{non type$-1$ caches miss the file $c_j$}) \nonumber\\
&= \sum_{n_2=0}^{\infty} \dots \sum_{n_L=0}^{\infty} \prod_{l = 2}^L (1-b_j^{(l)^c})^{\BS{n_l}} P(\bm{\mathcal{N}}_2^L = \bm{n}_2^L \vert N_1 = n_1),\nonumber
\end{align} 
where $\bm{n}_2^L = \left(n_2, \dots, n_L\right)$ and $\bm{\mathcal{N}}_2^L = (\mathcal{N}_2, \dots, \mathcal{N}_L)$ are $(L-1)$-tuples representing the coverage vectors of non type$-1$ caches.

We define the optimization problem to find the optimal \BSr{content} placement strategy maximizing the total hit probability for a type$-1$ cache as follows:
\begin{prb}
\label{modprb}
\begin{align}
&\max \text{ } f\left(\bm{b}^{(1)}\right)\nonumber\\
&\text{ }\mathbf{s.t.}\quad b_1^{(1)} + \dots + b_J^{(1)} = K_1, \quad b_j^{(1)} \in [0,1],\quad \forall j. \label{constraints}
\end{align}
\end{prb}
\subsection{Solution of the optimization problem}
In this section, we will analyze the structure of the optimization problem.
\begin{lem}
\label{convex}
Problem \ref{modprb} is a convex optimization problem.
\end{lem}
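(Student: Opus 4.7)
The plan is to verify the two standard ingredients of a convex program: the feasible set is convex and the objective is concave (since we are maximizing). The feasibility side is immediate because the constraints in \eqref{constraints} consist of one affine equality $\sum_j b_j^{(1)} = K_1$ and $J$ box constraints $b_j^{(1)}\in[0,1]$, whose intersection is a polytope.

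For the objective, I would rewrite
\begin{equation*}
f(\bm{b}^{(1)}) = 1 - \sum_{j=1}^J \sum_{n_1=0}^\infty \alpha_{j,n_1}\,(1-b_j^{(1)})^{n_1},
\end{equation*}
where $\alpha_{j,n_1} := a_j\,p_{n_1}^{(1)}\,q(j,n_1) \geq 0$. Nonnegativity is the one small thing to check: $a_j\geq 0$ and $p_{n_1}^{(1)}\geq 0$ by construction, and $q(j,n_1)$ is a (conditional) expectation of a product of terms $(1-b_j^{(l)^c})^{n_l}\in[0,1]$, hence lies in $[0,1]$. So each coefficient is nonnegative.

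The heart of the argument is then a one-variable convexity check: for each fixed $j$ and $n_1$, the map $b\mapsto (1-b)^{n_1}$ on $[0,1]$ is convex. For $n_1\in\{0,1\}$ it is affine; for $n_1\geq 2$ its second derivative $n_1(n_1-1)(1-b)^{n_1-2}$ is nonnegative on $[0,1]$. Viewed as a function of the full vector $\bm{b}^{(1)}$, this term depends on only the single coordinate $b_j^{(1)}$, so it is convex on $\mathbb{R}^J$ as well. A nonnegative combination of convex functions is convex, hence the double sum above is convex in $\bm{b}^{(1)}$, and therefore $f$ is concave. Maximizing a concave function over a convex polytope is a convex optimization problem, which is what we wanted.

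No step is really a genuine obstacle; the only thing to be slightly careful about is arguing that switching the infinite sum over $n_1$ and the convexity test is legitimate, which is immediate here because all coefficients $\alpha_{j,n_1}$ are nonnegative and the series converges (it is bounded above by $\sum_{n_1} p_{n_1}^{(1)}=1$), so the sum is a well-defined convex function as the pointwise supremum of its finite partial sums.
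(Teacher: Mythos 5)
Your argument is correct and follows essentially the same route as the paper: the paper's proof also rests on the observation that $1-f$ is separable into single-coordinate terms, each a nonnegative combination of the convex functions $b\mapsto(1-b)^{n_1}$, over the polyhedral constraint set \eqref{constraints}. You simply make explicit the details the paper leaves implicit (nonnegativity of the coefficients $a_j p_{n_1}^{(1)} q(j,n_1)$ and the passage to the infinite sum as an increasing limit of convex partial sums).
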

\begin{proof}
The objective function is separable with respect to (w.r.t.) $b_1^{(1)}, \dots, b_J^{(1)}$ (Replace $\sum_{j=1}^J a_j = 1$ and rewrite the objective function \eqref{hitprob}.). $1-f\left(\bm{b}^{(1)}\right)$ is increasing and convex in $b_j^{(1)}$, $\forall j$. Hence, it is a convex function of $\left(b_1^{(1)}, \dots, b_J^{(1)}\right)$.
\end{proof}
We already showed that $1-f\left(\bm{b}^{(1)}\right)$ is convex by Lemma \ref{convex} and the constraint set is linear as given in \eqref{constraints}. Thus, the Karush-Kuhn-Tucker (KKT) conditions provide necessary and sufficient conditions for optimality. The Lagrangian function corresponding to Problem \ref{modprb} becomes
\begin{align}
&L\left(\bm{b}^{(1)}, \nu, \bm{\lambda}, \bm{\omega}\right) = \sum_{j=1}^J a_j \sum_{n_1= 0}^{\infty}(1-b_j^{(1)})^{n_1} p_{n_1}^{(1)} q(j, n_1) \nonumber\\ 
&+ \nu \left(\sum_{j=1}^J b_j^{(1)} - K_1\right) - \sum_{j=1}^J \lambda_j b_j^{(1)}  + \sum_{j=1}^J \omega_j \left(b_j^{(1)} - 1\right), \nonumber\\
\end{align}
where $\bm{b}^{(1)}$, $\bm{\lambda}$, $\bm{\omega} \in \mathbb{R}_+^J$ and $\nu \in \mathbb{R}$.

Let $\bar{\bm{b}}^{(1)}$, $\bar{\bm{\lambda}}$, $\bar{\bm{\omega}}$ and $\bar{\nu}$ be primal and dual optimal. The KKT conditions for Problem \ref{modprb} state that
\begin{align}
\sum_{j=1}^J \bar{b}^{(1)}_j &= K_1, \label{kkt2}\\
0 \leq \bar{b}^{(1)}_j &\leq 1, \quad \forall j = 1,\dots, J, \label{kkt1}\\
\bar{\lambda}_j &\geq 0, \quad \forall j = 1,\dots, J,\label{kkt3}\\
\bar{\omega}_j &\geq 0,\quad \forall j = 1,\dots, J,\label{kkt4}\\
\bar{\lambda}_j \bar{b}^{(1)}_j &= 0,\quad \forall j = 1,\dots, J,\label{kkt5}\\
\bar{\omega}_j \left(\bar{b}^{(1)}_j - 1\right) &= 0, \quad \forall j = 1,\dots, J,\label{kkt6}
\end{align}
\begin{align}
-a_j \sum_{n_1= 0}^{\infty}n_1(1-b_j^{(1)})^{n_1-1}p_{n_1}^{(1)} q(j, n_1) + \bar{\nu} - \bar{\lambda}_j + \bar{\omega}_j &= 0, \label{kkt7}\\
\forall j = 1,\dots, &J \nonumber.
\end{align}
\begin{thm}
\label{optsol}
The optimal \BSr{content} placement strategy for Problem \ref{modprb} is
\begin{align}
\label{bfunc}
\bar{b}^{(1)}_j = \left\{
\begin{array}{rl}
1, & \text{if } \bar{\nu} < a_j p_{1}^{(1)} q(j, 1)\\
0, & \text{if } \bar{\nu} > a_j \sum_{n_1=0}^{\infty}n_1 p_{n_1}^{(1)} q(j, n_1),\\
\phi(\bar{\nu}), & \text{otherwise},
\end{array} \right.
\end{align}
where $\phi(\bar{\nu})$ is the solution over $b^{(1)}_j$ of
\begin{equation}
a_j \sum_{n_1= 0}^{\infty}n_1(1-b_j^{(1)})^{n_1-1} p_{n_1}^{(1)} q(j, n_1) = \bar{\nu}, \label{eqsb}
\end{equation}
and $\bar{\nu}$ can be obtained as the unique solution to the additional constraint
\begin{equation}
\bar{b}_1^{(1)} + \dots + \bar{b}_J^{(1)} = K_1. \label{eqK}
\end{equation}
\end{thm}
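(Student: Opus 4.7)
The plan is to exploit the convexity established in Lemma~\ref{convex}, which ensures that the KKT conditions \eqref{kkt2}--\eqref{kkt7} are both necessary and sufficient for global optimality. I would therefore solve these conditions directly, splitting into three cases according to which of the box constraints $0 \le b_j^{(1)} \le 1$ is active.

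First I would handle the upper-boundary case $\bar{b}_j^{(1)} = 1$. Complementary slackness \eqref{kkt5} forces $\bar{\lambda}_j = 0$, and the stationarity condition \eqref{kkt7} collapses drastically because every term $(1-\bar{b}_j^{(1)})^{n_1-1}$ vanishes except when $n_1 = 1$. This leaves $\bar{\omega}_j = a_j p_{1}^{(1)} q(j,1) - \bar{\nu}$, and the dual feasibility $\bar{\omega}_j \ge 0$ of \eqref{kkt4} yields exactly the first branch of \eqref{bfunc}. Next I would handle $\bar{b}_j^{(1)} = 0$: by \eqref{kkt6} we get $\bar{\omega}_j = 0$, substitution into \eqref{kkt7} gives $\bar{\lambda}_j = \bar{\nu} - a_j\sum_{n_1}n_1 p_{n_1}^{(1)} q(j,n_1)$, and requiring $\bar{\lambda}_j \ge 0$ gives the second branch. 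Finally, in the interior case $0 < \bar{b}_j^{(1)} < 1$, both $\bar{\lambda}_j$ and $\bar{\omega}_j$ vanish by \eqref{kkt5}--\eqref{kkt6}, and \eqref{kkt7} reduces to \eqref{eqsb}, whose solution defines $\phi(\bar{\nu})$.

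To justify that $\phi(\bar{\nu})$ is well defined I would argue that the map
\begin{equation}
b_j^{(1)} \longmapsto a_j \sum_{n_1=0}^{\infty} n_1 (1-b_j^{(1)})^{n_1-1} p_{n_1}^{(1)} q(j,n_1)
\end{equation}
is continuous and strictly decreasing on $[0,1]$, taking the value $a_j\sum_{n_1} n_1 p_{n_1}^{(1)} q(j,n_1)$ at $b_j^{(1)}=0$ and $a_j p_1^{(1)} q(j,1)$ at $b_j^{(1)}=1$; on the complementary range of $\bar{\nu}$ identified for the interior case, the intermediate value theorem produces a unique solution. The three branches therefore cover all values of $\bar{\nu}\in\mathbb{R}$ consistently, and $\bar{b}_j^{(1)}$ is a well-defined nonincreasing function of $\bar{\nu}$.

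The last step is to pin down $\bar{\nu}$ through the capacity constraint \eqref{eqK}. Since each $\bar{b}_j^{(1)}$ as a function of $\bar{\nu}$ is continuous and nonincreasing, the sum $\sum_j \bar{b}_j^{(1)}$ is likewise continuous and nonincreasing, ranging from $J$ (for $\bar{\nu}$ very small) down to $0$ (for $\bar{\nu}$ very large). Since $1 \le K_1 \le J$, there exists some $\bar{\nu}$ satisfying \eqref{eqK}, and the strict monotonicity on the interval where the sum takes the value $K_1$ (guaranteed because at least one $\bar{b}_j^{(1)}$ lies strictly in $(0,1)$ whenever the sum is not an integer lattice point) yields uniqueness. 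The main obstacle in writing this out carefully is the monotonicity/uniqueness argument for $\bar{\nu}$, since one must be slightly careful at the breakpoints where some $\bar{b}_j^{(1)}$ transitions between the interior regime and a boundary; however, on any interval where the sum equals $K_1$ the corresponding primal allocation is optimal and unique in the relevant coordinates, which is all that the statement requires.
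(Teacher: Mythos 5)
Your proposal is correct and follows essentially the same route as the paper: convexity makes the KKT conditions necessary and sufficient, and a case split on whether $\bar{b}_j^{(1)}$ is $0$, $1$, or interior, combined with the monotonicity of $b_j^{(1)} \mapsto a_j\sum_{n_1} n_1(1-b_j^{(1)})^{n_1-1}p_{n_1}^{(1)}q(j,n_1)$, yields the three branches and the determination of $\bar{\nu}$ from the capacity constraint. If anything, your treatment of the final step is slightly more careful than the paper's, since you note that $\bar{\nu}$ itself may fail to be unique on a flat segment of $\sum_j \bar{b}_j^{(1)}(\bar{\nu})$ while the primal solution remains well defined.
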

\begin{proof}
The proof is provided in Appendix \ref{optstr}.
\end{proof}

\BSr{Finding the optimal \BSr{content} placement strategy for all types of caches jointly is an interesting problem. However, this optimization problem is \emph{not convex}. Hence, we provide a heuristic algorithm to see how the overall system performance is improved. The procedure is as follows. At each iteration step we find the optimal strategy for a specific type of cache assuming that the placement strategies for other types of caches are known and fixed. Then we continue with the same procedure for the next type and we continue iterating over different types.} \BSr{In the next section we will evaluate the performance of this scheme.}
\section{Performance Evaluation}
\label{sec:performance}
In this section we will specify different network coverage models and show the performances of the placement strategies. In both models we use Zipf distribution for the file popularities. The probability that a user will ask for content $c_j$ is equal to
\begin{equation}
a_j = \frac{j^{-\gamma}}{\sum_{j=1}^J j^{-\gamma}}, \label{zipfpars}
\end{equation}
where $\gamma > 0$ is the Zipf parameter.
\subsection{Poisson Point Process (PPP) Model}
The caches follow a two-dimensional (2D) spatial homogeneous Poisson process with type$-i$ caches independently distributed in the plane with density $\lambda_i > 0$\JG{, $i = 1, \dots, L$. The} number of type$-i$ caches within radius $r$ follows a Poisson distribution with parameter $t_i = \lambda_i \pi r^2$\JG{, i.e.,}
\begin{align}
p_{n_i}^{(i)} &= P(\text{$n_i$ type$-i$ caches within radius $r$})\nonumber\\
&= \frac{t_i^{n_i}}{n_i!}e^{-t_i}. \label{poisdist}
\end{align}

The user is covered by $n_i$ type$-i$ caches and distribution of the different type of caches is independent of each other. Therefore, the total coverage distribution probability mass function $p_{\bm{n}}$ will be the product of individual probability distributions
\begin{equation}
p_{\bm{n}} = p_{n_1}^{(1)}p_{n_2}^{(2)}\dots p_{n_L}^{(L)}.
\end{equation}

Consider the case of two types of caches in the plane. Type$-1$ caches represent MBSs and type$-2$ caches represent SBSs with $K_1$ and $K_2$-slot cache memories, \BSr{respectively}. The content library size is $J = 100$. We set $K_1 = 1$ and $K_2 = 2$. We set the Zipf parameter $\gamma = 1$ and \JG{taking} $a_j$ \JG{according to} \eqref{zipfpars}. Also we set $\lambda_1 = 0.5$ \JG{ and $r=1$. We will consider various values for the deployment densities of SBSs, namely $\lambda_2$. Our aim is to compare the optimal \BSr{content} placement strategy with various heuristics. Therefore, we will compare various scenarios in the remainder of this subsection.}

For the [OPT/OPT] scenario first we find the optimal solution for MBSs assuming that there are no SBSs in the plane. Solving this optimization problem gives $\bm{\bar{b}}^{(1)} = \left(0.7136, 0.2723, 0.0141, 0, \dots, 0\right)$. The optimal \BSr{content} placement strategy results in storing the first three most popular files into MBSs with given probabilities and the resulting hit probability is $ f\left(\bm{\bar{b}}^{(1)}\right) = 0.1649$. Then we set $\bm{b}^{(1)^c} = \bm{\bar{b}}^{(1)}$, take it as an input, add SBSs into the plane and find the optimal \BSr{content} placement strategy $\bm{\bar{b}}^{(2)}$ for SBSs. \BSr{We observe that applying the iterative optimization procedure, namely repeatedly updating $\bm{\bar{b}}^{(1)}$ and $\bm{\bar{b}}^{(2)}$, does not improve the hit probability.}
\begin{figure}
\centering
\includegraphics[width=0.95\columnwidth]{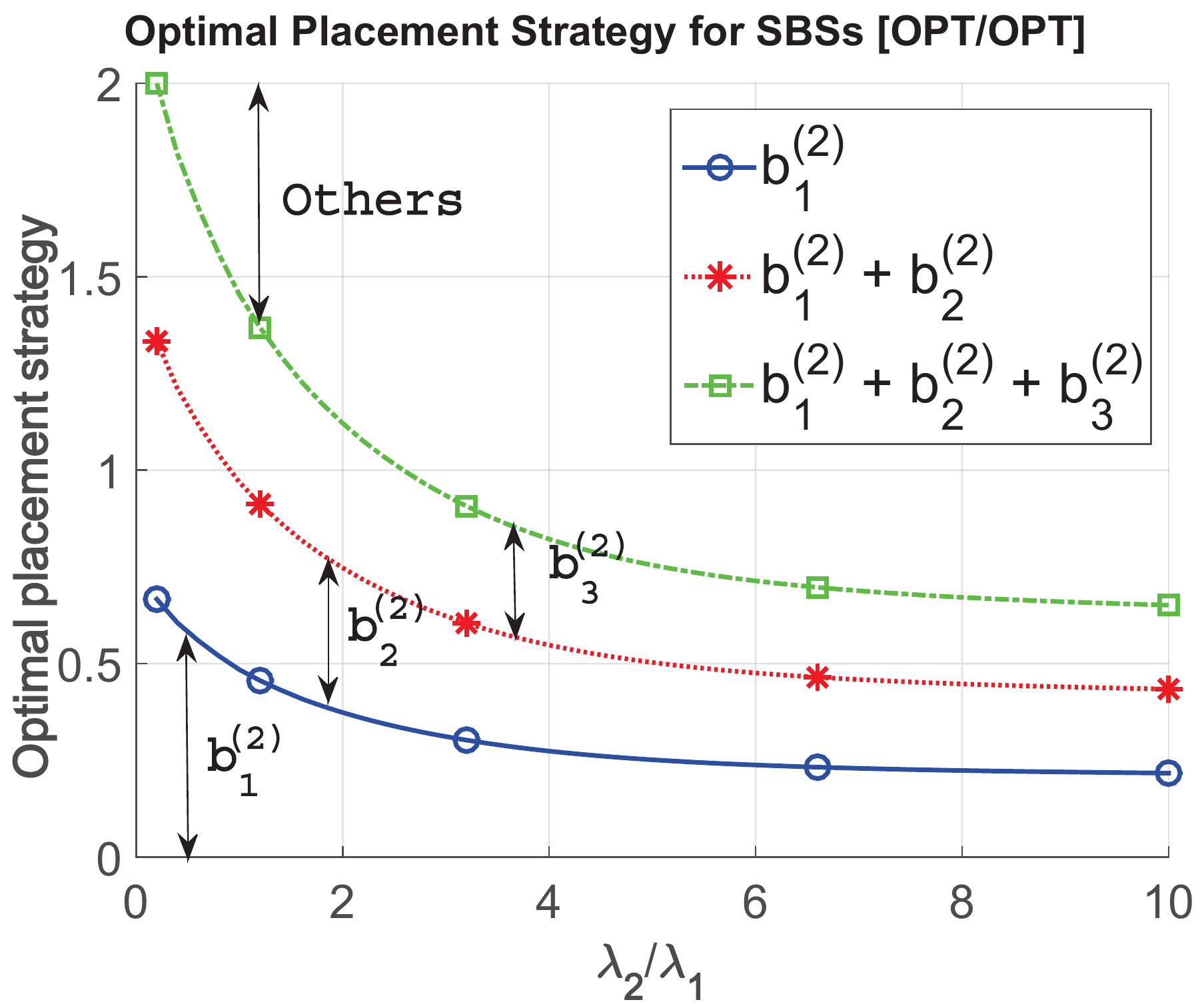}
\caption{Optimal \BSr{content} placement strategy $\bm{\bar{b}}^{(2)}$ for SBSs with different deployment densities $\lambda_2$ [OPT/OPT] (PPP).}
\label{optscsmarttype2zoomed}
\end{figure}
From Figure \ref{optscsmarttype2zoomed}, after deploying SBSs we see that when the deployment density of SBSs is low compared to MBSs, only the first three most popular files are stored in SBSs. As the deployment density of SBSs increases, we see that probability of storing less popular files in the caches increases. 

For the [POP/OPT] scenario we consider a heuristic for the placement in MBSs: The most popular content $c_1$ is stored at MBSs with probability (w.p.) $1$, i.e., $\bm{\bar{b}}^{(1)} = \left(1, 0, \dots, 0\right)$. The resulting hit probability is $ f\left(\bm{\bar{b}}^{(1)}\right) = 0.1527$. It is important to note that the hit probability under this heuristic policy is not significantly different from the optimal hit probability.

Next we add SBSs. We set $\bm{b}^{(1)^c} = \bm{\bar{b}}^{(1)}$, take it as an input, add SBSs into the plane and find the optimal \BSr{content} placement strategy $\bm{\bar{b}}^{(2)}$ for SBSs.
\begin{figure}
\centering
\includegraphics[width=0.95\columnwidth]{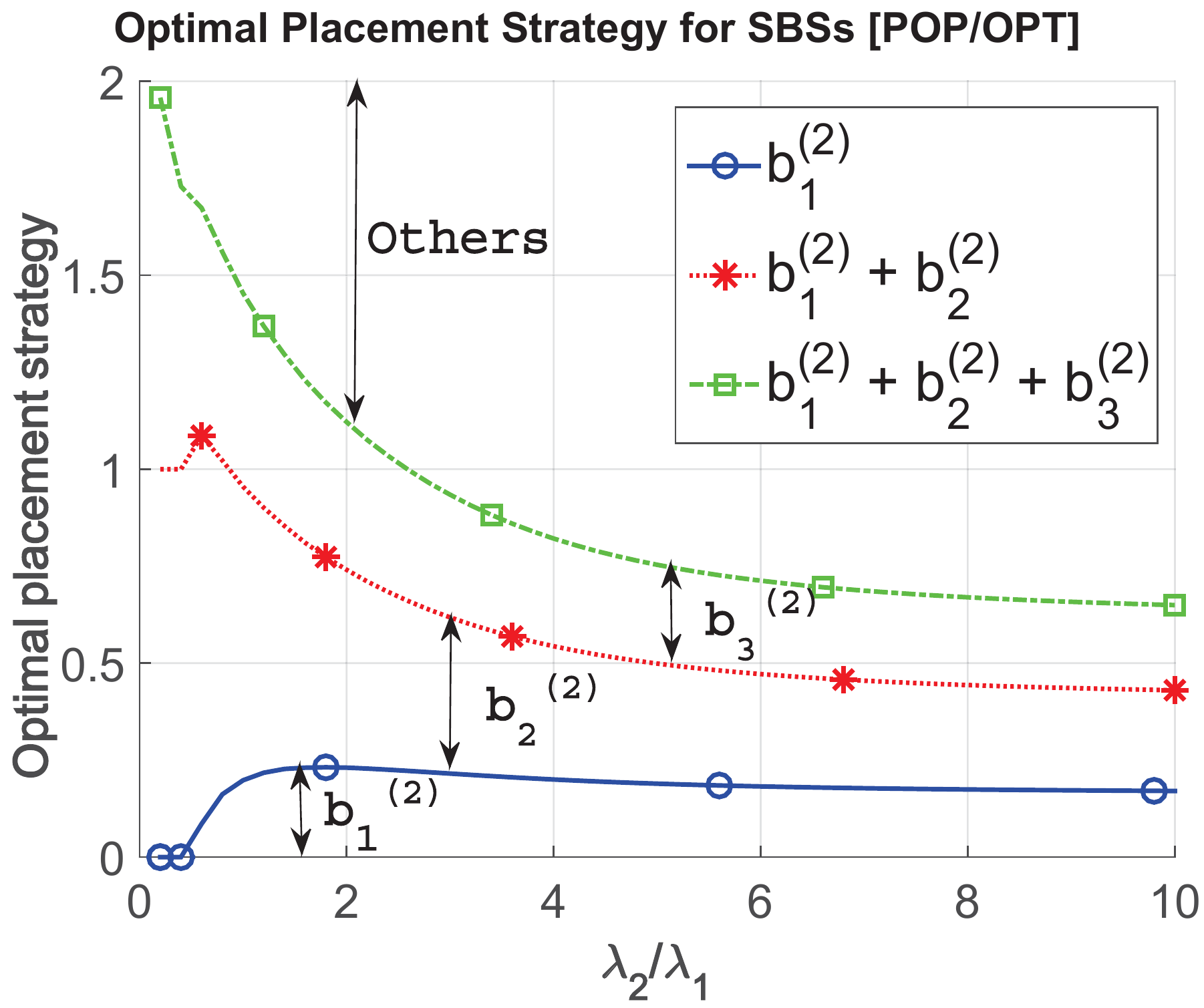}
\caption{Optimal \BSr{content} placement strategy $\bm{\bar{b}}^{(2)}$ for SBSs with different deployment densities $\lambda_2$ [POP/OPT] (PPP).}
\label{optschugelibtype2mostpop}
\end{figure}
\begin{figure}
\centering
\includegraphics[width=0.95\columnwidth]{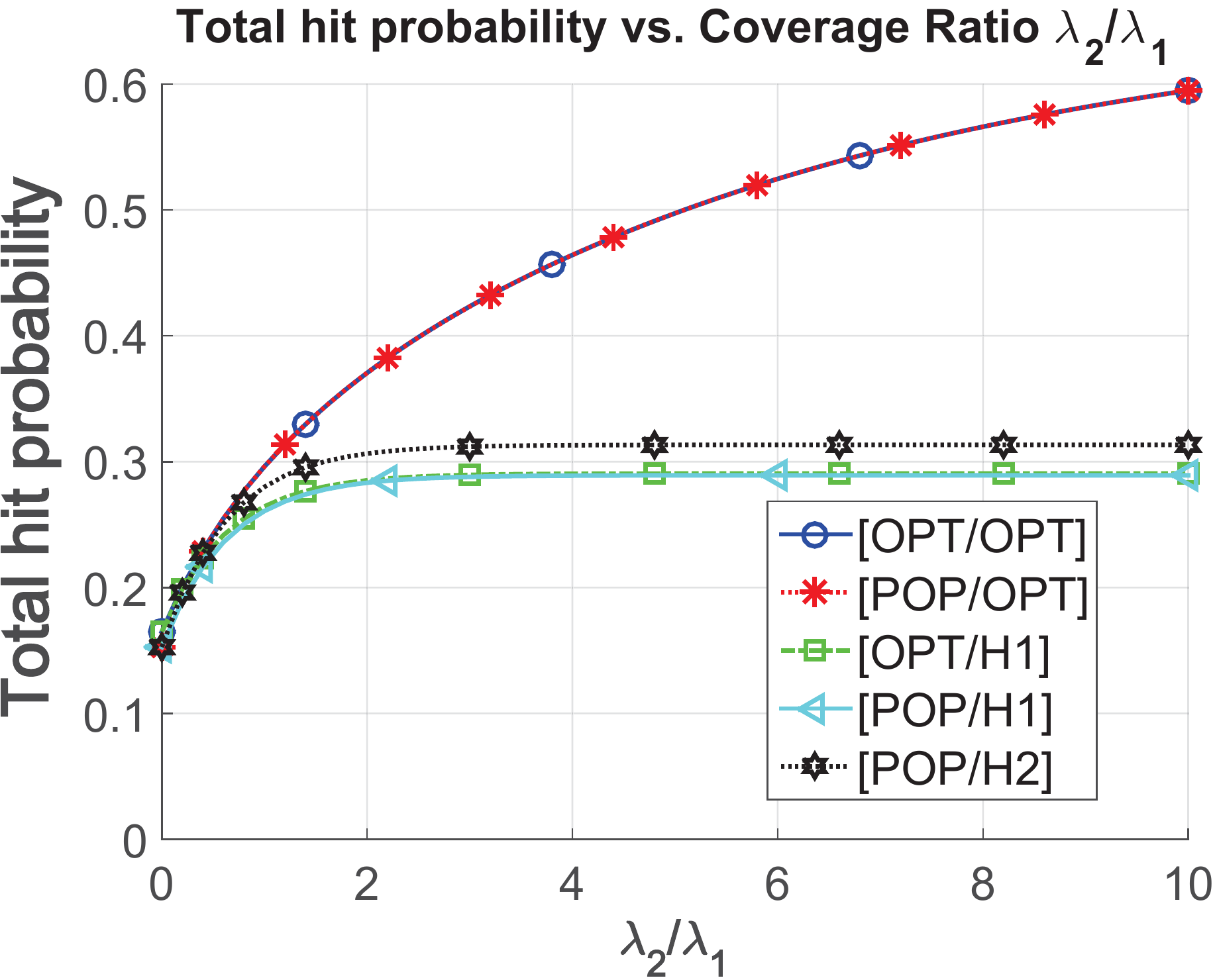}
\caption{The total hit probability evaluation for different SBS deployment densities (PPP).}
\label{hitprobpois}
\end{figure}
From Figure \ref{optschugelibtype2mostpop}, after deploying SBSs we see that when the deployment density of SBSs is low compared to MBSs, only $c_2$ and $c_3$ are stored in SBSs with probability $1$ in order to increase the chance of the user to get these files. As the deployment density of SBSs increases, we see that probability of storing $c_1$ in SBSs slightly increases. However, we see that since we cannot get $c_2$ and $c_3$ from any of the MBSs, probabilities of deploying $c_2$ and $c_3$ are always higher than $c_1$. Probability of storing other files also increases as the $\lambda_2$ increases.

From Figure \ref{hitprobpois}, \BSr{we see that the difference between the hit probabilities using the [OPT/OPT] and the [POP/OPT] heuristics decreases as the density of SBSs increases, i.e. we can use a heuristic placement policy for MBSs as finding the optimal \BSr{content} placement strategy for SBSs compensate the performance penalty caused by ill-adjusted content placement of MBSs}. However, as we will see next, no straightforward heuristic seems to exist for the \BSr{content} placement policy in SBSs.

The first heuristic is to use is to store the two most popular files in SBSs, denoted by [OPT/H1] and [POP/H1] for an optimal and a heuristic policy in MBS, respectively. The second heuristic is to store the second and third most popular files in SBSs [POP/H2]. From Figure~\ref{hitprobpois} it is clear that these heuristic policies achieve significantly lower hit probability than the optimal policy.

\subsection{M-or-none deployment model}
Type$-1$ caches represent MBSs and follow a two-dimensional (2D) spatial homogeneous Poisson process with density $\lambda_1 > 0$, and number of MBSs within radius $r$ follows a Poisson distribution satisfying \eqref{poisdist} with $i = 1$.

We assume that if a user is covered by at least one MBS (type$-1$ cache), then it should be covered by $M$ caches in total. As a result, network operators serve users with providing them $M$ caches as long as they are connected to one of the macro base stations. If a user is not in the coverage area of a MBS, then it doesn't receive any service from other caches either at all. Therefore, we have
\begin{equation*}
P\left(\bm{\mathcal{N}}_2^L = \bm{n}_2^L \vert N_1 = 0\right) = \left\{
\begin{array}{rl}
0 & \text{if } \max\{\bm{n}_2^L\} > 0,\\
1 & \text{if } \max\{\bm{n}_2^L\} = 0,
\end{array} \right.
\end{equation*}
and 
\begin{equation*}
P\left(\bm{\mathcal{N}}_2^L = \bm{n}_2^L \vert N_1 = n_1 \right) = \left\{
\begin{array}{rl}
0 & \text{if } \sum_{l=2}^L n_l \neq (M - n_1)^+,\\
1 & \text{if } \sum_{l=2}^L n_l = (M - n_1)^+,
\end{array} \right.
\end{equation*}
when $n_1 > 0$.

Consider the case of two types of caches in the plane. Type$-1$ caches represent MBSs and type$-2$ caches represent SBSs with $K_1$ and $K_2$-slot cache memories, \BSr{respectively}. The content library size is $J = 100$. We set $K_1 = 1$ and $K_2 = 2$. Also, we set the Zipf parameter $\gamma = 1$ and taking $a_j$ according to \eqref{zipfpars}. Also we set $\lambda_1 = 0.5$ and $r = 1$.

For the [OPT/OPT] scenario, optimal \BSr{content} placement strategy for MBSs is $\bm{\bar{b}}^{(1)} = \left(0.7136, 0.2723, 0.0141, 0, \dots, 0\right)$ and the resulting hit probability is $ f\left(\bm{\bar{b}}^{(1)}\right) = 0.1649$. Then we set $\bm{b}^{(1)^c} = \bm{\bar{b}}^{(1)}$, take it as an input and find the optimal \BSr{content} placement strategy $\bm{\bar{b}}^{(2)}$ for different $M$ values. \BSr{We observe that using the iterative optimization procedure, namely repeatedly updating $\bm{\bar{b}}^{(1)}$ and $\bm{\bar{b}}^{(2)}$, does not improve the hit probability.}
\begin{figure}
\centering
\includegraphics[width=0.95\columnwidth]{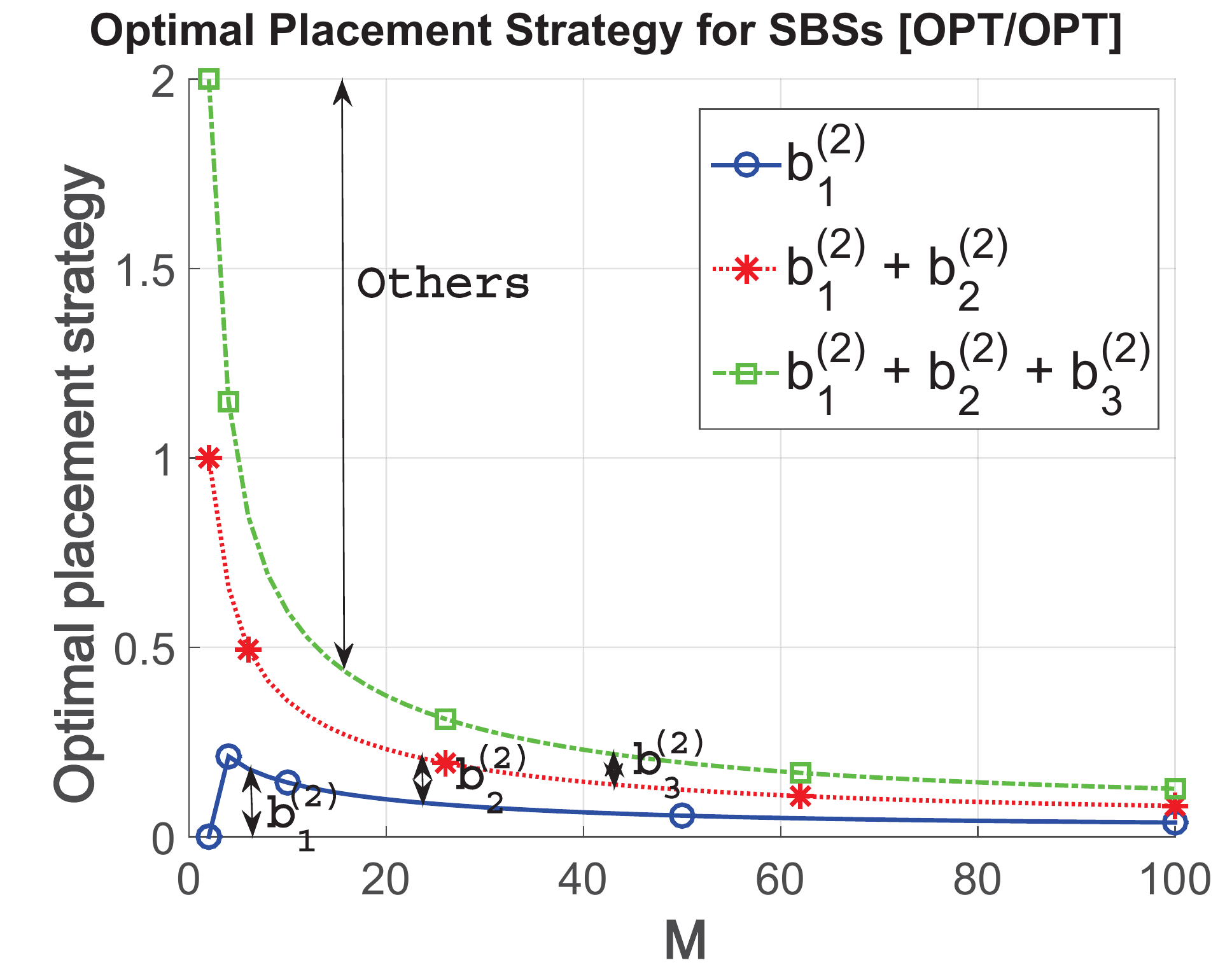}
\caption{Optimal \BSr{content} placement strategy $\bm{\bar{b}}^{(2)}$ for SBSs for different $M$ values [OPT/OPT] (M-or-None).}
\label{cachingpolicyMorNoneSmart}
\end{figure}
From Figure \ref{cachingpolicyMorNoneSmart}, we see that the probability of storing less popular files increases as $M$ increases.

For the [POP/OPT] scenario, we have $\bm{\bar{b}}^{(1)} = \left(1, 0, \dots, 0\right)$ and the resulting hit probability is $ f\left(\bm{\bar{b}}^{(1)}\right) = 0.1527$. Then we set $\bm{b}^{(1)^c} = \bm{\bar{b}}^{(1)}$, take it as an input, add SBSs into the plane and find the optimal \BSr{content} placement strategy $\bm{\bar{b}}^{(1)}$ for SBSs.
\begin{figure}
\centering
\includegraphics[width=0.95\columnwidth]{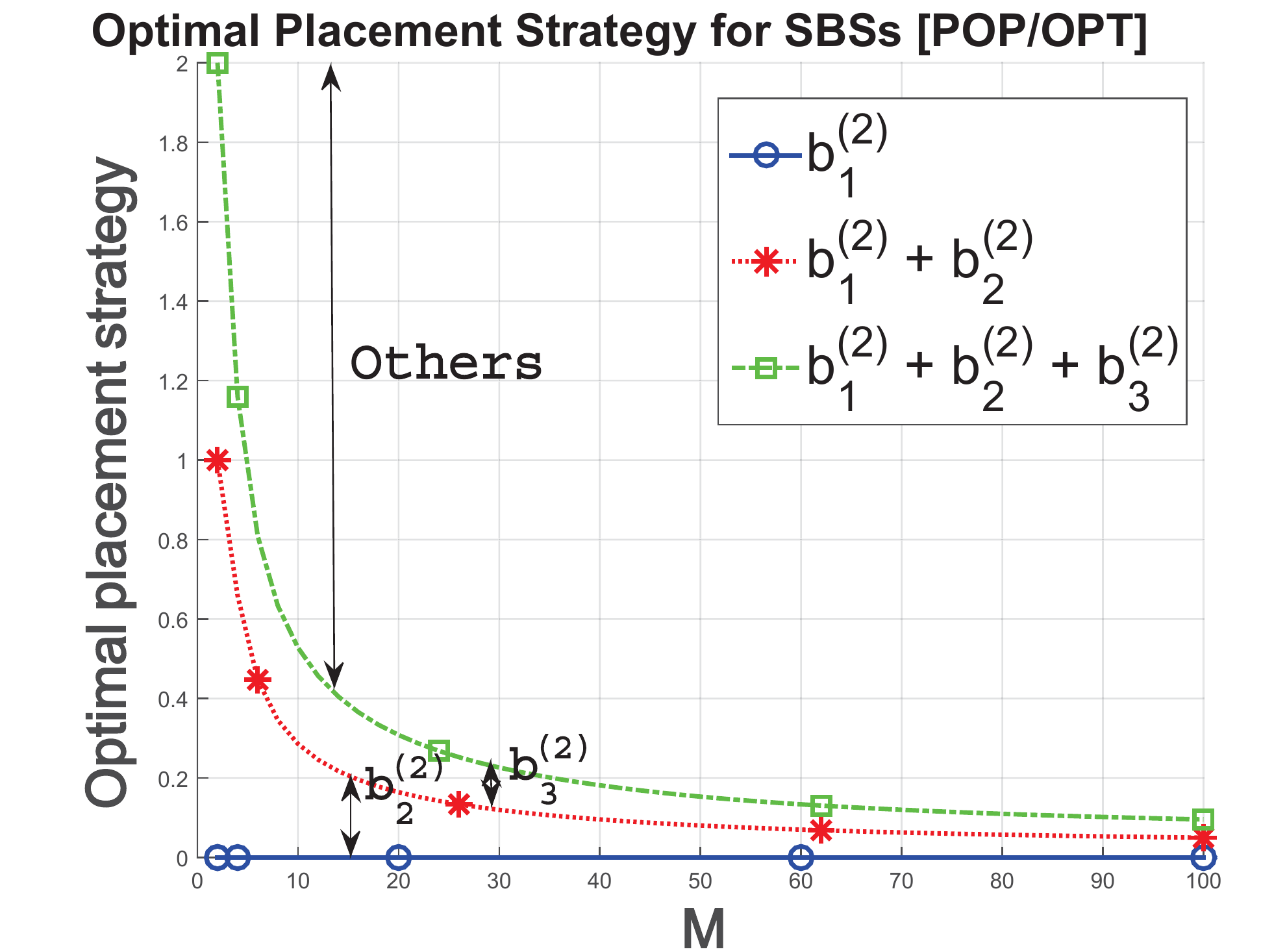}
\caption{Optimal \BSr{content} placement strategy $\bm{\bar{b}}^{(2)}$ for SBSs for different $M$ values [POP/OPT] (M-or-None).}
\label{cachingpolicyMorNoneMostpop}
\end{figure}
\begin{figure}
\centering
\includegraphics[width=0.95\columnwidth]{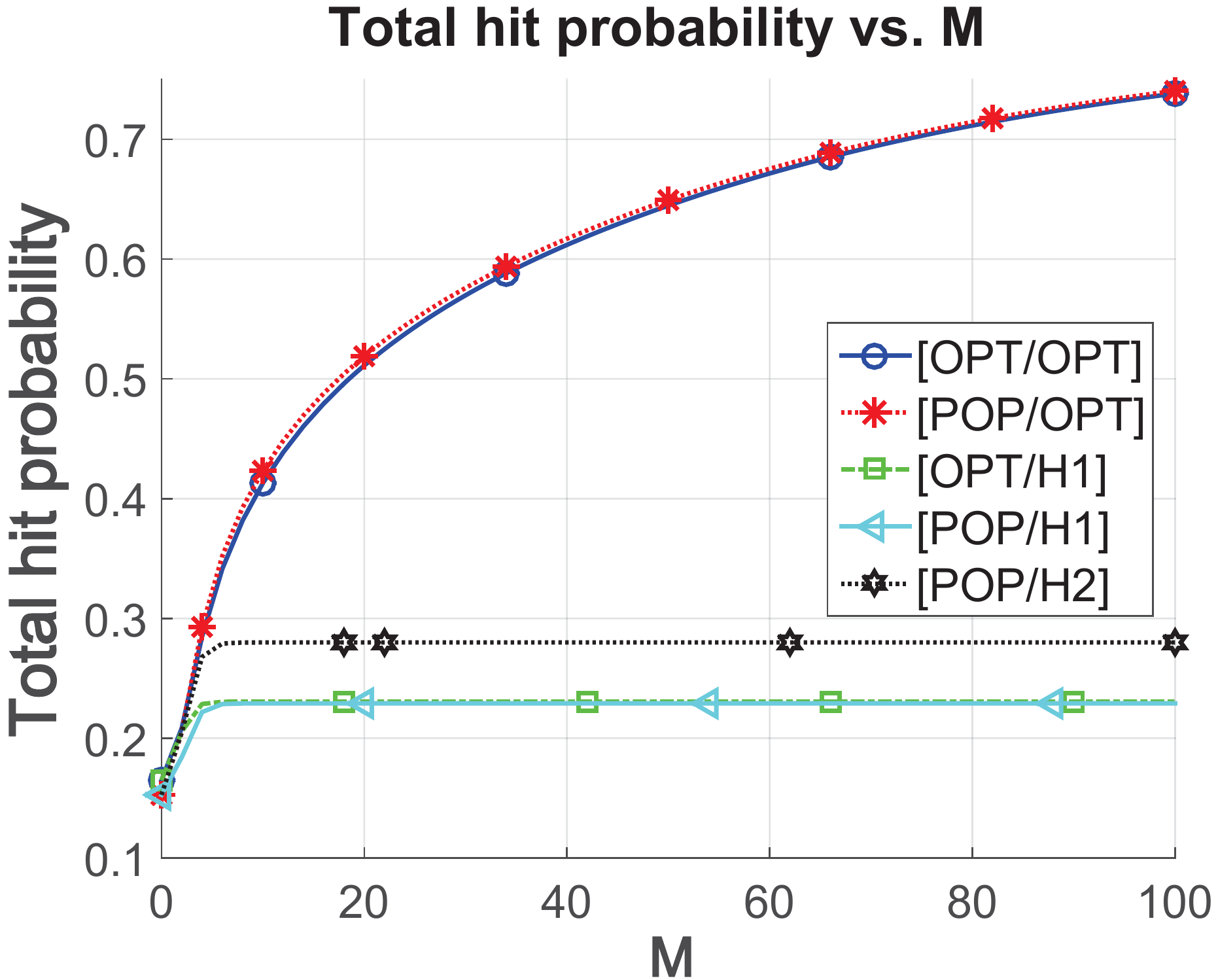}
\caption{The total hit probability evaluation for different $M$ values (M-or-None).}
\label{hitprobMorNone}
\end{figure}
From Figure \ref{cachingpolicyMorNoneMostpop}, as $c_1$ is stored in MBSs w.p. 1 and SBSs are present in the system only when $n_1 > 0$, $c_1$ is never stored at SBSs. We see that probability of storing $c_2$ and $c_3$ decreases and probability of storing other files increases as $M$ increases.

From Figure \ref{hitprobMorNone}, we see that the hit probabilities under the [OPT/OPT] and the [POP/OPT] are identical and increase as $M$ increases, i.e. we can use a heuristic placement policy for MBSs. For heuristic SBS deployment policies, [OPT/H1], [POP/H1] and [POP/H2] policies achieve significantly lower hit probability than the optimal policy.

\section{Discussion and Conclusion}
\label{discussion}
In this paper \BSwcnc{we have shown} that whether MBSs use the optimal deployment strategy or store ``the most popular content", has very limited impact on the total hit probability \BSwcnc{if the SBSs are using the optimal deployment strategy}. It is important to optimize the content placement strategy of the SBSs and the total hit probability is increased significantly when the SBSs use the optimal deployment strategy. It is shown that heuristic policies for SBSs like storing the popular content that is not yet available in the MBS results in significant performance penalties. To conclude, using the optimal deployment strategy for the SBSs is crucial and it ensures the overall network to have the greatest possible total hit probability independent of the deployment policy of MBSs. \BSwcncfin{Finally, we have shown that solving the individual problem to find optimal \BSr{content} placement strategy for different types of base stations iteratively, namely repeatedly updating the placement strategies of the different types, does not improve the hit probability.}

\appendices
\section{Proof of Theorem~\ref{optsol}} \label{optstr}
From \eqref{kkt5}, \eqref{kkt6} and \eqref{kkt7}, we have
\begin{equation}
\bar{\omega}_j = \bar{b}_j^{(1)} \left[a_j \sum_{n_1=0}^{\infty} n_1 (1-b_j^{(1)})^{n_1 -1} p_{n_1}^{(1)} q(j, n_1) - \bar{\nu}\right], \label{omegaeq}
\end{equation}
which, when insterted into \eqref{kkt6}, gives
\begin{equation}
\bar{b}_j^{(1)}\left(\bar{b}_j^{(1)} - 1\right)\left[a_j \sum_{n_1=0}^{\infty}n_1 (1-b_j^{(1)})^{n_1 -1} p_{n_1}^{(1)} q(j,n_1)\right] = \bar{\nu} \label{star}.
\end{equation}
From \eqref{star}, we see that $0 < \bar{b}_j^{(1)} < 1$ only if 
\begin{equation*}
\bar{\nu} = a_j \sum_{n_1=0}^{\infty}n_1 (1-b_j^{(1)})^{n_1 -1} p_{n_1}^{(1)} q(j, n_1).
\end{equation*}
Since we know that $0 \leq b_j^{(i)} \leq 1$, this implies that
\begin{align*}
\bar{\nu} \in \left[a_j p_{1}^{(1)} q(j, 1),\text{ } a_j \sum_{n_1=0}^{\infty} n_1 p_{n_1}^{(1)} q(j,n_1) \right].
\end{align*}
If $\bar{\nu} < a_j p_{1}^{(1)}q(j, 1)$, we have
\begin{equation*}
\bar{\omega}_j = \bar{\lambda}_j + a_j \sum_{n_1=0}^{\infty} n_1 (1-b_j^{(1)})^{n_1 -1} p_{n_1}^{(1)} q(j, n_1) - \bar{\nu} > 0.
\end{equation*}
Thus, from \eqref{kkt6}, we have $\bar{b}_j^{(1)} = 1$. Similarly, if $\bar{\nu} > a_j \sum_{n_1=0}^{\infty} n_1 p_{n_1}^{(1)} q(j, n_1)$, we have
\begin{equation*}
\bar{\lambda}_j = \bar{\omega}_j +  \bar{\nu} -a_j \sum_{n_1=0}^{\infty}n_1 (1-b_j^{(1)})^{n_1 -1} p_{n_1}^{(1)} q(j, n_1) > 0.
\end{equation*}
Hence, from \eqref{kkt5}, we have $\bar{b}_j^{(1)} = 0$.

Finally, since $\sum_{j=1}^J {b}_j^{(1)}$ is a decreasing function in $\nu$, solving $J$ equations of \eqref{eqsb} satisfying \eqref{eqK} give the unique solution $\bar{\nu}$.


\begin{thebibliography}{99}

\bibitem{cisco} ``Cisco VNI Global Mobile Data Traffic Forecast, 2015-2020",
San Jose, CA, USA, Feb. 2016.

\bibitem{femtocells} J. Andrews, H. Claussen, M. Dohler, S. Rangan, M. Reed,
``Femtocells: Past, Present and Future",
\emph{IEEE Journal on Selected Areas in Communications},
vol. 30, no. 3, pp. 497-508, 2012.

\bibitem{femtod2d} N. Golrezaei, A. F. Molisch, A. G. Dimakis, G. Caire,
``Femtocaching and Device-to-Device Collaboration: A New Architecture for Wireless Video Distribution",
\emph{IEEE Communications Magazine},
vol. 51, no. 4, pp. 142-149, 2013.

\bibitem{femtocaching} K. Shanmugam, N. Golrezaei, A. G. Dimakis, A. F. Molisch, G. Caire,
``FemtoCaching: Wireless Content Delivery Through Distributed Caching Helpers",
\emph{IEEE Transactions on Information Theory},
vol. 59, no.12, pp. 8402-8413, 2013.

\bibitem{approximation} K. Poularakis, G. Iosifidis, L. Tassiulas,
``Approximation Algorithms for Mobile Data Caching in Small Cell Networks",
\emph{IEEE Transactions on Communications},
vol. 62, no.10, pp. 3665-3677, 2014.

\bibitem{optimalgeographic} B. B{\l}aszczyszyn, A. Giovanidis,
``Optimal Geographic Caching In Cellular Networks",
\emph{IEEE International Conference on Communications (ICC) 2015},
pp. 3358 - 3363, London, UK, 2015.

\bibitem{fundamental} M. A. Maddah-Ali, U. Niesen
``Fundamental Limits of Caching",
\emph{IEEE Transactions on Information Theory},
vol. 60, pp. 2856 - 2867, 2014.

\bibitem{distrcach} S. Ioannidis, L. Massoulie, A. Chaintreau,
``Distributed Caching over Heterogeneous Mobile Networks",
\emph{SIGMETRICS 2010},
pp. 311-322, NY, USA, 2010.

\bibitem{exploiting} K. Poularakis, L. Tassiulas,
``Exploiting User Mobility for Wireless Content Delivery",
\emph{IEEE International Symposium on Information Theory Proceedings (ISIT) 2013},
pp. 1017-1021, Istanbul, Turkey, 2013.

\bibitem{cacheenabled} E. Ba\c{s}tu\u{g}, M. Bennis, M. Debbah,
``Cache-enabled Small Cell Networks: Modeling and Tradeoffs",
\emph{11th International Symposium on Wireless Communications Systems},
pp. 649-653, 2014.

\bibitem{diststorage} E. Altman, K. Avrachenkov, J. Goseling,
``Distributed Storage in the Plane",
\emph{ Networking Conference, IFIP 2014},
pp. 1-9, Trondheim, Norway, 2014.

\bibitem{optimizatioofcaching} K. Avrachenkov, X. Bai, J. Goseling,
``Optimization of Caching Devices with Geometric Constraints",
\emph{arXiv preprint arXiv: 1602.03635}, 
2016.

\bibitem{cooperativecaching} Z. Chen, J. Lee, T. Q. S. Quek, M. Kountouris,
``Cooperative Caching and Transmission Design in Cluster-Centric Small Cell Networks",
\emph{arXiv preprint arXiv: 1601.00321}, 
2016.

\bibitem{utility} M. Dehghan, L. Massoulie, D. Towsley, D. Menasche, Y. C. Tay,
``A Utility Optimization Approach to Network Cache Design",
\emph{Proc. IEEE Infocom},
San Francisco, CA, USA, 2016.

\bibitem{baccelli1}
F.~Baccelli and B.~Blaszczyszyn, \emph{{Stochastic Geometry and Wireless
  Networks, Volume I - Theory}}, ser. Foundations and Trends in Networking Vol.
  3: No 3-4, pp 249-449.\hskip 1em plus 0.5em minus 0.4em\relax NoW Publishers,
  2009, vol.~1.
\end{thebibliography}
\end{document}